\definecolor{ao(english)}{rgb}{0.0, 0.5, 0.0}
\pgfplotsset{compat=newest}
\newtheorem{theorem}{Theorem}
\newtheorem{lemma}[theorem]{Lemma}
\newtheorem{assumption}[theorem]{Assumption}
\newcommand{\p}{\partial}
\newcommand{\real}{\mathbb{R}}
\newcommand{\normal}[2]{\mathcal{N}\left({#1},{#2}\right)}
\newcommand{\ev}{\mathbb{E}}
\newcommand{\trace}[1]{\textup{tr} \ #1}
\newcommand{\blkdiag}{\textup{\texttt{blkdiag}}}
\newcommand{\sign}{\textup{sgn}}
\newcommand{\evcanon}{\ev_{\qinit,\pinit}}
\newcommand{\gradlog}{h}
\newcommand{\hpos}{q}
\newcommand{\hmom}{\rho}
\newcommand{\utol}{\delta_u}
\newcommand{\loopi}{k}
\newcommand{\utmp}{u^\dagger}
\newcommand{\usgn}{u_\textup{sgn}}
\newcommand{\pstcost}{J}
\newcommand{\epstcost}{\bar{J}}
\newcommand{\st}{\textup{s.t.}}
\newcommand{\defeq}{:=}
\newcommand{\mysec}{\S}
\newcommand{\disccol}{black} 
\newcommand{\discuss}[1]{\textcolor{\disccol}{#1}} 
\newcommand{\sys}{\mathcal{S}}
\newcommand{\hamsys}{\mathcal{H}}
\newcommand{\nq}{d}
\newcommand{\Inertia}{M}
\newcommand{\qmap}{\vec{q}}
\newcommand{\qinit}{\bar{\hpos}}
\newcommand{\pinit}{\bar{\hmom}}
\newcommand{\qnom}{q^*}
\newcommand{\stepsize}{\epsilon}
\newcommand{\simlength}{\ell}
\newcommand{\simsteps}{L}
\newcommand{\hmcparam}{{\gamma_\textup{hmc}}}
\newcommand{\myalg}{Alg.}
\newcommand{\mbf}[1]{\mathcal{I}_{#1}}
\newcommand{\mriparam}{\tau}
\newcommand{\udp}{u^{\textup{dp}}}
\newcommand{\maxpower}{\bar{U}}
\newcommand{\inind}{0}
\newcommand{\target}{\pi}
\newcommand{\posterior}{p}
\newcommand{\pstmean}{\mu}
\newcommand{\evpost}{\ev_{\param|y}}
\newcommand{\evdata}{\ev_{y|\paramtr}}
\newcommand{\uopt}{u^*}
\newcommand{\unom}{\tilde{u}}
\newcommand{\yopt}{y^*}
\newcommand{\ynom}{\tilde{y}}
\newcommand{\ydet}{\tilde{y}}
\newcommand{\xdet}{\tilde{x}}
\newcommand{\numy}{N}
\newcommand{\numq}{M}
\newcommand{\canonical}{p}
\newcommand{\numdp}{T}
\newcommand{\feasinput}{\mathcal{U}}
\newcommand{\nparam}{n_{\param}}
\newcommand{\weight}{W}
\newcommand{\umin}{\underline{u}}
\newcommand{\umax}{\bar{u}}
\newcommand{\param}{\theta}
\newcommand{\paramtr}{{\theta^*}}
\title{\LARGE \bf
Nonlinear input design as optimal control of a Hamiltonian system
}
\author{Jack Umenberger and Thomas B. Sch\"on
\thanks{J. Umenberger and T.B. Sch\"on are with the Dept. of Information Technology, Uppsala University, 751 05 Uppsala, Sweden.  
E-mails: \{jack.umenberger, thomas.schon\}@it.uu.se.}}%
\begin{document}

\maketitle
\thispagestyle{empty}
\pagestyle{empty}

\begin{abstract}
We propose an input design method for a general class of parametric probabilistic models, 
including nonlinear dynamical systems with process noise.
The goal of the procedure is to select inputs such that the parameter posterior distribution concentrates about the true value of the parameters; however, exact computation of the posterior is intractable.
By representing (samples from) the posterior as trajectories from a certain Hamiltonian system, we transform the input design task into an optimal control problem.
The method is illustrated via numerical examples, including
MRI
pulse sequence design.
\end{abstract}


\section{Introduction}
Building mathematical models of systems from measured input-output data is a task of central importance in a number of fields, from science and engineering, to medicine and finance.
Well-chosen inputs can improve the quality of the model by exciting the system in such a way as to extract relevant information.
The process of choosing such inputs, subject to experimental constraints, is known as \emph{input design}.

The information content associated with an input is often quantified by (some scalar function of) the Fisher Information Matrix (FIM) \cite[\mysec7.4]{Ljung1999}. 
For models in which the mapping from inputs to outputs is linear, the FIM is an affine function of the input spectrum, and so input design in the frequency domain
can be formulated as a convex optimization problem.
For general nonlinear models, even computation of the FIM can be a challenging task that must be carried out approximately, making optimal input design difficult; 
cf. \emph{related work} below for further details.

In this paper, we take a different approach to input design for general nonlinear models, and optimize the posterior distribution over model parameters directly.
Using tools from statistical inference, namely Hamiltonian Monte Carlo (HMC) \cite{duane1987hybrid}, we construct a Hamiltonian system, trajectories of which correspond to samples from the posterior distribution.
The input design problem of shaping the posterior for accurate parameter inference can then be formulated as an optimal control problem for this Hamiltonian system.

\subsubsection{Related work}\label{sec:related_work}
While the approach presented in this paper is applicable to any model satisfying the assumptions stated in \mysec\ref{sec:problem}, our focus shall be on input design for \emph{dynamical} systems.
Input design for linear systems is by now a mature topic, with a large body of literature, cf. e.g. \cite{goodwin1977dynamic,Ljung1999,jansson2005input,manchester2010input} and the references therein.
In contrast, results on input design for nonlinear systems are more scarce.
Input design for structured systems (i.e. the interconnection of linear dynamics with a static nonlinearity) was studied in \cite{vincent2010input,gevers2012experiment} and \cite{mahata2016information}, with the latter proving that D-optimal inputs can be realized by a mixture of Gaussians for certain Wiener systems. 
Ideas from input design for impulse response systems were extended to nonlinear systems admitting a nonparametric Volterra series representation in \cite{birpoutsoukis2018}.
Similarly, design for nonlinear FIR models was considered in \cite{hjalmarsson2007optimal,larsson2010optimal,de2016d}.
We note that for linearly parametrized static nonlinear systems, input design is a convex problem; this observation motivated the work
\cite{forgione2014experiment} which considers inputs restricted to a finite number of amplitude levels, for dynamical systems.

\subsubsection{Paper structure}\label{sec:structure}
The paper proceeds as follows: 
a detailed problem specification is provided in \mysec\ref{sec:problem}, followed by a brief primer on HMC in \mysec\ref{sec:hmc}.
The Hamiltonian optimal control problem is presented in \mysec\ref{sec:ideal_problem}, while approximations required to solve the problem, along with the final proposed algorithm, are stated in \mysec\ref{sec:approximations}. 
Numerical experiments illustrating the method are presented in \mysec\ref{sec:examples}.

\section{Problem set-up}\label{sec:problem}
In this work, the object of interest is a system, $\sys$, that accepts inputs $u$ and generates measurable outputs $y$.
The relationship between inputs and outputs is modeled by a probability distribution 
$p(y,\param|u)=p(y|u,\param)p(\param)$,
where $\param\in\real^{\nparam}$ denotes the model parameters. 
Here $p(y|u,\param)$ is the likelihood, and $p(\param)$ denotes any prior belief we may have over the model parameters.

\begin{assumption}\label{ass:model}
There exists $\paramtr\in\real^{\nparam}$ such that the data-generating distribution for $\sys$ is given by $p(y|u,\paramtr)$.
\end{assumption}
That is, we assume the existence of `true parameters' $\paramtr$, such that, given $u$, the output $y$ from $\sys$ is distributed according to $p(y|u,\paramtr)$.
Our goal is to design an input $u^*$ such that $\paramtr$ can be accurately inferred from the input/output data $(u^*,y^*)$.
In this paper, we shall primarily be concerned with \emph{dynamical} systems, $\sys$, for which the inputs and outputs shall be time series, 
i.e., 
$u=\lbrace u_t\rbrace_{t=1}^\numdp$ and 
$y=\lbrace y_t\rbrace_{t=1}^\numdp$. 
However, the approach applies to any probabilistic input/output system, e.g., systems for which the mapping from $u$ to $y$ is static.

The main assumption governing the class of probabilistic systems to which our approach is applicable is as follows:
\begin{assumption}\label{ass:latent}
There exist latent variables $x$ such that gradients of the log joint-likelihood, $\log p(y,x|u,\param)$, w.r.t. $\param$ and $x$, can be computed. 
Furthermore, any prior distribution over $\param$ is chosen such that $\nabla_\theta\log p(\param)$ can be computed. 
\end{assumption}  
Assumption \ref{ass:latent} is necessary for the application of HMC; cf., \mysec\ref{sec:hmc}.
In this paper, we will work with state-space models for which the joint-likelihood 
$p(x,y|u,\param)$
decomposes as
\begin{align*}
p(x_\inind|\param){\prod}_{t=1}^{\numdp}p(y_t|x_t,u_t,\param)p(x_{t}|x_{t-1},u_{t-1},\param),
\end{align*}
where the internal states $x=\lbrace x_t\rbrace_{t=1}^\numdp$ constitute suitable latent variables for Assumption \ref{ass:latent}.
As a concrete example, consider the nonlinear dynamical system
\begin{align}
x_{t+1} &= f_\param(x_t,u_t) + w_t, \ y_t = g_\param(x_t,u_t) + v_t, \label{eq:nonlinear_state_space} \\ 
\ w_t&\sim\normal{0}{\Sigma_{w,\param}},  \ v_t\sim\normal{0}{\Sigma_{v,\param}}, \ x_\inind \sim \normal{\mu_\param}{\Sigma_{\inind,\param}}, \nonumber
\end{align}
For this model, 
$\log p(y_t|x_t,u_t,\param)\propto |y_t-g_\param(x_t,u_t)|_{\Sigma_{v,\param}^{-1}}^2$
and 
$\log p(x_{t}|x_{t-1},u_{t-1},\param) \propto |x_t-f_\param(x_{t-1},u_{t-1})|_{\Sigma_{w,\param}^{-1}}^2$, 
so $\log p(x,y|u,\param) $ is differentiable, when $f_\param$, $g_\param$, $\Sigma_{w,\param}$, $\Sigma_{v,\param}$ and $\Sigma_{\inind,\param}$ are differentiable.
For the special-case of linear, Gaussian dynamical systems this assumption holds with $x=\varnothing$ (no latent variables). 

To quantify the notion of `accurately inferring' $\paramtr$, 
we seek inputs $u$ that cause the posterior $p(\param|u,y)$ to concentrate about the true parameters, i.e., we seek to minimize 
\begin{equation}\label{eq:cost}
\pstcost(u,y) \defeq \evpost\left[|\param-\paramtr|^2\right] = \int |\param-\paramtr|^2p(\param|u,y)d\param.
\end{equation}
Of course, we do not have access to outputs $y$ before providing an input $u$, so we shall instead seek to minimize 
\begin{equation}\label{eq:ecost}
\epstcost(u) \defeq 
\evdata\left[\pstcost(u,y)\right] = \int \pstcost(u,y) p(y|u,\paramtr)dy,
\end{equation}
i.e., the \emph{expected} cost, where the expectation is w.r.t. the true data-generating distribution, $p(y|u,\paramtr)$, given $u$. 
Both \eqref{eq:cost} and \eqref{eq:ecost} depend on the true parameters, $\paramtr$.

\begin{assumption}\label{ass:true}
The `true' parameter value $\paramtr$ is known.
\end{assumption}
This is a somewhat unrealistic, albeit ubiquitous, assumption in the experiment design literature \cite[\mysec13]{Ljung1999}, as $\paramtr$ is the very quantity we wish to estimate. 
Nevertheless, in applications some reasonable guess for $\paramtr$ can usually be made; alternatively, multi-stage adaptive or robust optimization \cite{rojas2007robust} may be employed to iteratively improve the initial estimate.

Finally, our choice of inputs is subject to constraints.
Let $\feasinput$ denote the set of feasible inputs; 
common choices include constraints on input amplitude, 
e.g., $\feasinput = \lbrace u: \umin \leq u_t \leq \umax, \ \forall t \rbrace$, 
and/or power, 
e.g., $\feasinput = \lbrace u: \sum_t u_t'u_t \leq \maxpower \rbrace$ for some $\maxpower\in\real$.
The only requirement for $\feasinput$ is that projection onto this set can be carried out efficiently.

To summarize, the goal of the paper is to solve the input design problem:
$\min_{u\in\feasinput} \epstcost(u)$.
The principal difficulty is computing expected values w.r.t. the posterior distribution $p(\param|u,y)$ for the general class of probabilistic systems considered in this paper.
Our key to circumventing this difficulty is the construction of a Hamiltonian system, the trajectories of which correspond to samples from the posterior.
This system forms the basis of HMC, which we describe next.

\section{Hamiltonian Monte Carlo}\label{sec:hmc}
In this section we provide a brief pragmatic introduction to the use of HMC for statistical inference; for further details, cf. \cite{neal2011mcmc}.
Consider a physical system, 
denoted $\hamsys$,
described by position $\hpos\in\real^{\nq}$ and momentum $\hmom\in\real^\nq$, which collectively form the system state $(\hpos,\hmom)$.
Let $K(\hmom)$ and $U(\hpos)$ denote the kinetic and potential energy of this system, respectively.
The behavior of 
$\hamsys$ 
is then characterized by its \emph{Hamiltonian}, $H(\hpos,\hmom)=U(\hpos)+K(\hmom)$, and Hamilton's equations of motion:
\begin{equation}\label{eq:hamiltons_equations}
\frac{d\hpos_i}{dt} = \frac{\p H}{\p \hmom_i}, \quad \frac{d\hmom_i}{dt} = -\frac{\p H}{\p \hpos_i}, \quad \textup{for } i=1,\dots,d.
\end{equation}
The Hamiltonian defines a canonical distribution
(in the statistical mechanics sense) for the states $(\hpos,\hmom)$, defined by
\begin{equation}\label{eq:canonical}
\canonical_{u,y}(\hpos,\hmom) = (1/Z)\exp\left(-(H(\hpos,\hmom))/T\right), 
\end{equation} 
where $T$ is the temperature of the system, and $Z$ is an appropriate normalization constant.
From the perspective of statistical mechanics, a given energy function $H$ leads to a probability distribution $\canonical_{u,y}$.
For the purpose of statistical \emph{inference}, HMC takes the opposite perspective: given a probability distribution of interest (a `target distribution'), $\target(\param)$, we can define a Hamiltonian $H$ such that the canonical distribution in \eqref{eq:canonical} coincides with $\target(\param)$.
In particular, one can let the positions represent the parameters of interest, i.e. $\hpos=\param$, and choose $U(\hpos)=-\log\target(\hpos)$.
The momentum variables enter only for the purpose of constructing the Hamiltonian system; we are not interested in the distribution over $\hmom$, for the purpose of inference.
Given this, a popular choice for the kinetic energy is the quadratic function $K(\hmom)=\hmom'\Inertia^{-1}\hmom$, which leads to a Gaussian distribution over $\hmom$.
Here, $\Inertia=mI$ can be interpreted as the (positive definite) inertia matrix.
{Whenever HMC is used in this paper, the target distribution is $\target(\hpos)=\posterior(\param|u,y)$, i.e., the posterior distribution over the parameters $\param$ of $\sys$, given $u$ and $y$.}

To generate samples from $\target(\hpos)$, a Markov chain is constructed as follows:
given a current state $(\hpos,\hmom)$, a new state $(\hpos^*,\hmom^*)$ is proposed by first re-sampling the momentum from its canonical (Gaussian) distribution, and then forward-simulating the Hamiltonian dynamics \eqref{eq:hamiltons_equations} for some finite time, $\simlength$.
To simulate the Hamiltonian dynamics it is necessary to employ a discrete-time approximation of \eqref{eq:hamiltons_equations}.
A popular choice is the so-called `leapfrog' method (which is reversible and preserves volume exactly), one iteration $(t\rightarrow t+\stepsize)$ of which involves the following sequence of steps:
\begin{subequations}\label{eq:leapfrog}
	\begin{align}
	\hmom_i(t+\stepsize/2) &= \hmom_i(t) - (\stepsize/2)
\nabla_{\hpos_i}{U(\hpos(t))}, \label{eq:leapfrog_one} \\
	\ \hpos_i(t+\stepsize) &=\hpos_i(t) + (\stepsize/m)\hmom_i(t+\stepsize/2),  \\
	\ \hmom_i(t+\stepsize) &= \hmom_i(t+\stepsize/2) - (\stepsize/2)\nabla_{\hpos_i} U(\hpos(t+\stepsize)).
	\end{align}
\end{subequations}
The proposed state $\hpos^*$  is then accepted (as a new sample from $\target(\hpos)$) with probability 
$\min(1,\exp(-H(\hpos^*,\hmom^*)+H(\hpos,\hmom)))$.
Note that the accept/reject step is only required to correct for errors in the numerical integration of \eqref{eq:hamiltons_equations}.
To run HMC, the following parameters $\hmcparam=\lbrace m,\simlength,\stepsize\rbrace$ must be specified; see e.g. \cite{neal2011mcmc} for practical guidelines on these choices.

\section{Input design via Hamiltonian dynamics}\label{sec:ideal_problem}
Equipped with this knowledge of HMC, let us first present the intuition for the proposed approach to input design.

\subsection{Intuitive explanation and illustration of the approach}\label{sec:illustration}

Consider a probabilistic system $\sys$ and the associated Hamiltonian system $\hamsys$, as defined in \mysec\ref{sec:problem} and \mysec\ref{sec:hmc}, respectively.
Given inputs/outputs $(u,y)$ from $\sys$, trajectories of $\hamsys$ correspond to samples from the posterior $p(\param|u,y)$.
The cost $\pstcost(u,y)$ can be thought of, roughly speaking, as the expected distance of a trajectory of $\hamsys$, given initial conditions sampled from $\canonical_{u,y}$.
To minimize $\pstcost(u,y)$ we can think of solving an optimal control problem for $\hamsys$:
choose $u$ (and, indirectly, $y$) so as to drive the position component ($q=\param$) of the state, after some finite time, close to the target state, $\qnom=\paramtr$.
This idea is illustrated in Fig. \ref{fig:illustrate}, by application to a linear dynamical system
\begin{align}
x_{t+1}=Ax_t+Bu_t+w_t, \ y_t=Cx_t+Du_t+v_t.
\end{align}
with $w_t\sim\normal{0}{\Sigma_w}$ and $v_t\sim\normal{0}{\Sigma_v}$.
Let $\param=\lbrace A_{21}, A_{22}\rbrace$, i.e., these are the unknown parameters that we wish to estimate from data.
True parameter values are given by $\paramtr=\lbrace -0.2,0.7\rbrace$.
We shall return to this example in \mysec\ref{sec:ex_linear}, where further details are provided, but for now focus on the two inputs shown in Fig. \ref{fig:illustrate}(a): 
`nominal' $\unom$ and `optimized' $\uopt$.
Trajectories of the Hamiltonian systems corresponding to these inputs are shown in Fig. \ref{fig:illustrate}(b).
Though initialized at the same positions, the (final value of) trajectories corresponding to $\uopt$ tend to concentrate around $\qnom$, more so than those corresponding to $\unom$.
In Fig. \ref{fig:illustrate}(c) we plot the posterior distributions (computed exactly, using a Kalman filter), and observe that $p(\param|\uopt,\yopt)$ is indeed much more `peaked' around $\paramtr$ compared to $p(\param|\unom,\ynom)$.

\begin{figure}
	\centering
	\subfloat[Different input sequences to the linear dynamical system.]{\includegraphics[width=\columnwidth]{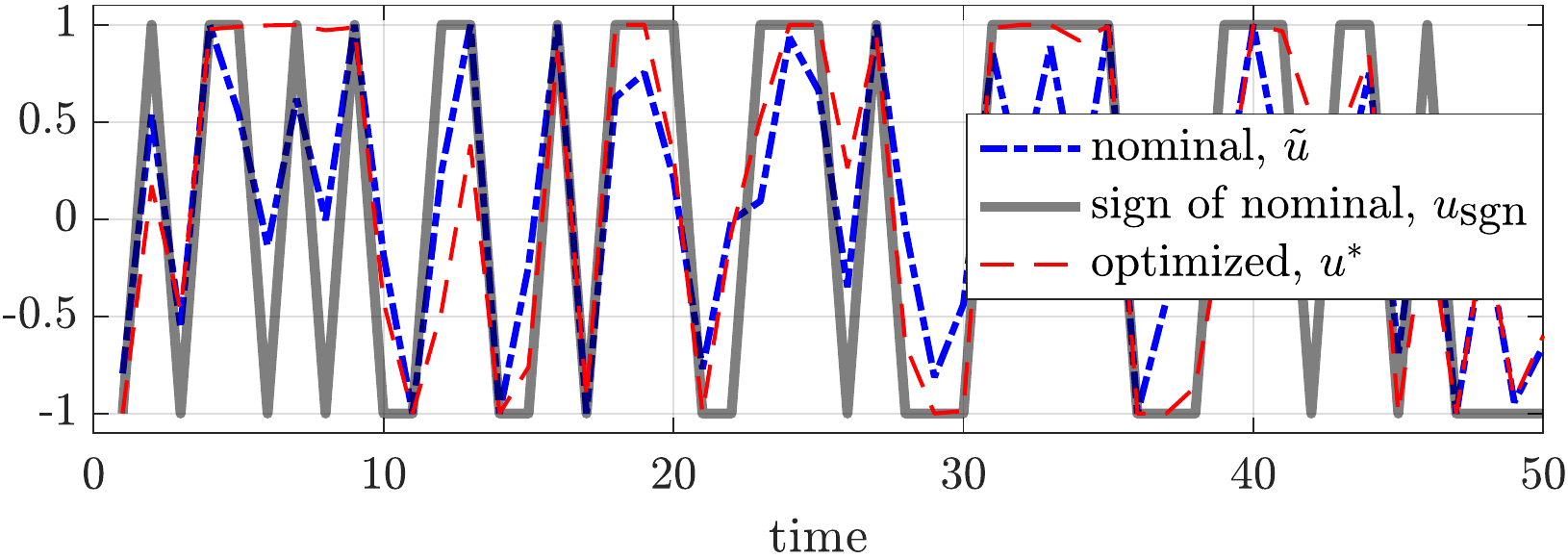}}\\
	\subfloat[Position component of trajectories of the Hamiltonian systems associated with the different inputs in (a). 
	The squared error $\sum|\hpos(\simsteps)-\qnom|^2$ is 1.34 and 0.39 for $\unom$ and $\uopt$, respectively.
	15 trajectories per input are depicted.
	]{\includegraphics[width=\columnwidth]{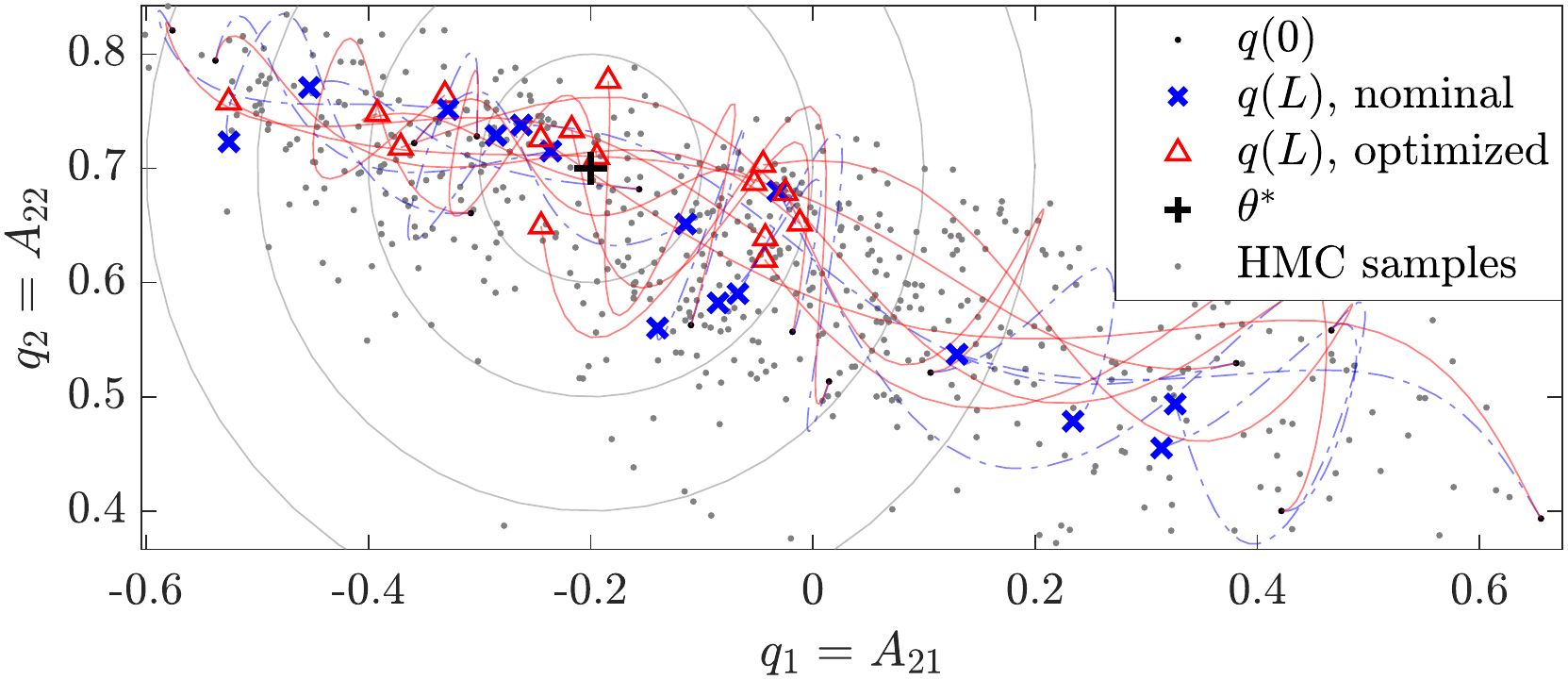}} \\
	\subfloat[Posterior distributions corresponding to inputs in (a).]{\includegraphics[width=\columnwidth]{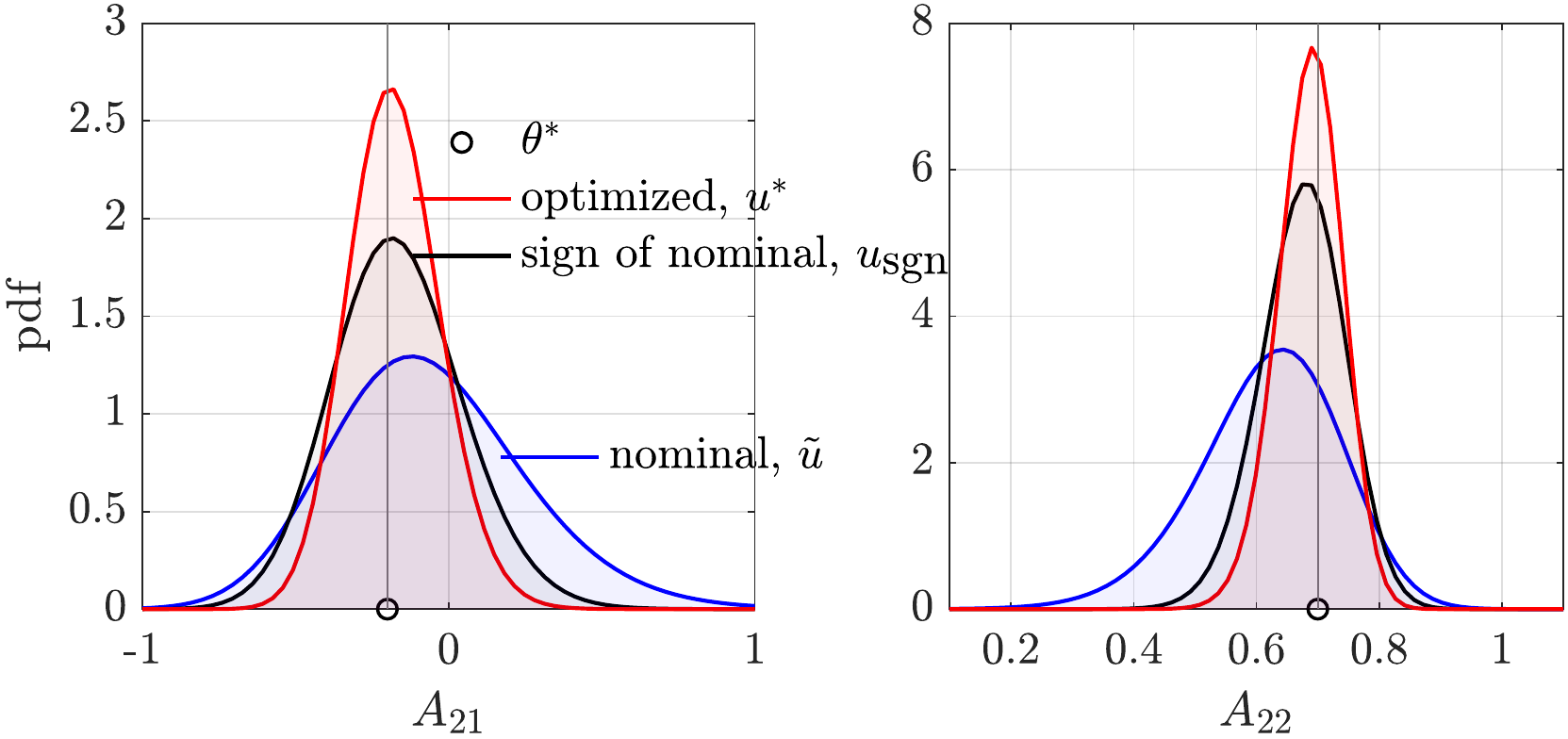}} 
	\caption{Illustration of proposed method on a linear system, cf. \mysec\ref{sec:illustration} for details.}
	\label{fig:illustrate}
\end{figure}

\subsection{Computing the cost function with Hamiltonian dynamics}

To make this idea more precise, we now relate the cost function $\pstcost$ defined in \eqref{eq:cost} to the optimal control problem described above.
Let $\qmap_\simlength(u,y,\qinit,\pinit)$ denote the new state obtained by (exactly) evolving the Hamiltonian system $\hamsys$ (given $u$ and $y$) forward for duration $\simlength$, starting from the initial state $(\qinit,\pinit)$.
For ease of exposition, suppose that $\nabla_\param \log p(y|u,\param)$ can be evaluated, such that the position of the Hamiltonian can be chosen as $\hpos=\param$.
We will relax this assumption in the sequel. 
Similarly, let $\qnom=\paramtr$.
Then, the optimal control problem described above can be formulated as 
\begin{equation}\label{eq:opt_control_cost}
\min_{u\in\feasinput} \ \evcanon\left[ |\qmap_\simlength(u,y,\qinit,\pinit) - \qnom|^2 \right],
\end{equation}
where $\evcanon$ denotes expectation w.r.t. the canonical distribution $\canonical_{u,y}$ in \eqref{eq:canonical}, given $u$ and $y$.
We then have the following equivalence between \eqref{eq:opt_control_cost} and $\min_{u\in\feasinput} \ \pstcost(u,y)$:

\begin{lemma}\label{lem:cost_equiv}
Consider $\pstcost(u,y)$ defined in \eqref{eq:cost}. For all $\simlength$,
	\begin{align*}
	\pstcost(u,y) = \evcanon \left[ |\qmap_\simlength(u,y,\qinit,\pinit) - \qnom|^2 \right]. 
	\end{align*}
\end{lemma}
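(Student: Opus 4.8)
The plan is to show that the canonical distribution $\canonical_{u,y}$ is invariant under the exact Hamiltonian flow $\qmap_\simlength$, and that its marginal over the position variable $\hpos$ is exactly the posterior $\posterior(\param|u,y)$. These two facts together collapse the right-hand side to the left-hand side.

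First I would recall the construction from \mysec\ref{sec:hmc}: with $\hpos=\param$, $U(\hpos)=-\log\target(\hpos)=-\log\posterior(\hpos|u,y)$ (up to an additive constant absorbed into $Z$), and $K(\hmom)=\hmom'\Inertia^{-1}\hmom$, the Hamiltonian separates as $H(\hpos,\hmom)=U(\hpos)+K(\hmom)$, so the canonical density factorizes:
\begin{equation*}
\canonical_{u,y}(\hpos,\hmom) = \frac{1}{Z}\exp\!\left(-U(\hpos)/T\right)\exp\!\left(-K(\hmom)/T\right) \propto \posterior(\hpos|u,y)\,\normalsmall{0}{(T/2)\Inertia}(\hmom).
\end{equation*}
Hence the $\hpos$-marginal of $\canonical_{u,y}$ is $\posterior(\param|u,y)$, and integrating $|\hpos-\qnom|^2$ against $\canonical_{u,y}$ gives exactly $\pstcost(u,y)$ as defined in \eqref{eq:cost} --- this handles the case $\simlength=0$.

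Second, for general $\simlength$, I would invoke the two standard structural properties of Hamiltonian flow: (i) it preserves the Hamiltonian, $H(\qmap_\simlength(u,y,\qinit,\pinit)) = H(\qinit,\pinit)$, since $dH/dt = \sum_i (\p H/\p\hpos_i)(d\hpos_i/dt) + (\p H/\p\hmom_i)(d\hmom_i/dt) = 0$ by \eqref{eq:hamiltons_equations}; and (ii) it preserves phase-space volume (Liouville's theorem), since the flow's divergence $\sum_i \p/\p\hpos_i(\p H/\p\hmom_i) + \p/\p\hmom_i(-\p H/\p\hpos_i) = 0$. Because $\canonical_{u,y}$ is a function of $H$ alone, (i) and (ii) imply that $\qmap_\simlength$ is measure-preserving with respect to $\canonical_{u,y}$: if $(\qinit,\pinit)\sim\canonical_{u,y}$ then $(\hpos,\hmom)=\qmap_\simlength(u,y,\qinit,\pinit)\sim\canonical_{u,y}$ as well. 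Then
\begin{equation*}
\evcanon\!\left[|\qmap_\simlength(u,y,\qinit,\pinit)-\qnom|^2\right] = \int |\hpos - \qnom|^2 \canonical_{u,y}(\hpos,\hmom)\,d\hpos\,d\hmom = \int|\hpos-\qnom|^2\posterior(\hpos|u,y)\,d\hpos = \pstcost(u,y),
\end{equation*}
where the change of variables is where invariance is used, the second equality marginalizes out $\hmom$, and the last is \eqref{eq:cost}.

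The main obstacle is being careful about what ``exactly evolving'' means and the domain/regularity conditions: Liouville's theorem and energy conservation require the vector field in \eqref{eq:hamiltons_equations} to be well-defined and the flow to exist for time $\simlength$ (guaranteed by the differentiability of $\log\posterior$ assumed in Assumption \ref{ass:latent}, plus mild growth conditions so trajectories don't escape to infinity). I would also note that the $\qmap_\simlength$ in the lemma refers to the true continuous-time flow, not the leapfrog approximation, so no accept/reject correction is needed --- the equality is exact precisely because we use the ideal dynamics. A remark that the discretization error is what the Metropolis step in \mysec\ref{sec:hmc} corrects would clarify the gap between this idealized lemma and the implementable algorithm of \mysec\ref{sec:approximations}.
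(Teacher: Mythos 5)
Your proof is correct and follows essentially the same route as the paper's: invariance of the canonical distribution under the exact Hamiltonian flow, plus the fact that the position-marginal of $\canonical_{u,y}$ is the posterior $\posterior(\param|u,y)$. The only difference is that you derive the invariance from energy conservation and Liouville's theorem, whereas the paper simply cites it from the HMC literature.
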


\begin{proof}
As the Hamiltonian dynamics \eqref{eq:hamiltons_equations} leave the canonical distribution $\canonical_{u,y}$ invariant \cite{neal2011mcmc}, 
$\qmap_\simlength(u,y,\qinit,\pinit)$
is distributed according to $\posterior(\param|u,y)$.
Furthermore, as $q=\param$ and $\qnom=\paramtr$, we have
$\ev_{\qinit,\pinit} \left[ |\qmap_\simlength(u,y,\qinit,\pinit) - \qnom|^2 \right] = \evpost \left[|\param-\paramtr|^2\right]$. 
\end{proof}

It is worth noting that the usual `bias-variance decomposition' applies to $\pstcost(u,y)$.
Recall that $\evpost$ denotes expectation w.r.t. $\posterior(\param|u,y)$. 
With $\pstmean\defeq\evpost\left[\param]\right]$ we have
\begin{subequations}
	\begin{align*}
	&\evpost \left[|\param-\paramtr|^2\right] 
	= \evpost \left[|\param-\pstmean + \pstmean - \paramtr|^2 \right] \\
	= & \evpost \left[ |\pstmean - \paramtr|^2 + 2(\param-\pstmean)'(\pstmean - \paramtr) + |\param-\pstmean|^2 \right]  \\
	= & |\pstmean - \paramtr|^2 + \trace \evpost \left[(\param-\pstmean)(\param-\pstmean)'\right],
	\label{eq:cancel}
	\end{align*}
\end{subequations}
i.e., minimizing $\pstcost(u,y)$ (or solving \eqref{eq:opt_control_cost}) is equivalent to minimizing the sum of
the 2-norm error between the posterior mean and the true parameters,
and the trace of the posterior variance.
This can be observed in Fig \ref{fig:illustrate}(c).

For ease of exposition, we assumed that $\nabla_\param \log p(y|u,\param)$ could be computed. 
For general probabilistic systems, e.g., nonlinear state-space models such as \eqref{eq:nonlinear_state_space}, this is not the case.
However, as stated in Assumption \ref{ass:latent}, we only require that gradients of $\log p(y,x|u,\param)$ are computable, for some latent variable $x$, cf., \mysec\ref{sec:problem}.
In this general setting, we augment the state of $\hamsys$ with these latent variables, and choose $\hpos=\left[\param;x\right]$.
From \eqref{eq:leapfrog_one}, HMC requires $\nabla_\hpos U(\hpos)$, where
\begin{equation*}
-U(\hpos) = \log p(q|u,y) \propto \log\left( p(y|u,\hpos)p(\hpos|u)\right).
\end{equation*}
For $\hpos=\left[\param;x\right]$, $\log\left( p(y|u,\hpos)p(\hpos|u)\right)$ decomposes as
\begin{align*}
&\log p(y|u,x,\param) + \log p(x|u,\param) + \log p(\param) \\
= &\log p(y,x|u,\param) + \log p(\param),
\end{align*}
hence Assumption \ref{ass:latent} is sufficient for computation of $\nabla_q U(q)$.
We can then optimize a weighted version of \eqref{eq:opt_control_cost}, $\evcanon\left[ |\qmap_\simlength(u,y,\qinit,\pinit) - \qnom|_\weight^2 \right]$,
in which $\qnom=[\paramtr;0]$ and $\weight=\blkdiag(I,0)$.
This weighted version ignores the components of $\hpos$ associated with $x$, so that we retain
$\ev_{\qinit,\pinit} \left[ |\qmap_\simlength(u,y,\qinit,\pinit) - \qnom|^2 \right] = \evpost \left[|\param-\paramtr|^2\right]$. 

To summarize, Lemma \ref{lem:cost_equiv} established equivalence between the optimal control problem \eqref{eq:opt_control_cost} and the cost $\pstcost(u,y)$ in \eqref{eq:cost}.
As we wish to optimize the expected cost, $\epstcost(u)$ given in \eqref{eq:ecost}, the optimization problem we propose solving for input design is given by
\begin{equation}\label{eq:proposed_opt_prob}
\uopt = \arg\max_{u\in\feasinput} \ \evdata\left[ \evcanon \left[ |\qmap_\simlength(u,y,\qinit,\pinit) - \qnom|^2 \right] \right],
\end{equation}
where $ \evdata$ denotes expectation w.r.t. $p(y|u,\paramtr)$.

\section{Approximate solution to control problem}\label{sec:approximations}
The optimization problem \eqref{eq:proposed_opt_prob}, as stated, is difficult to solve. 
In this section we propose a number of approximations to obtain a tractable alternative.
At the outset, we emphasize that we are only interested in local optimization:
i.e., we will assume some nominal input $\unom$, and seek an improved input by numerical optimization.

\subsection{Approximating  expected values}
There are two expected values in \eqref{eq:proposed_opt_prob}, neither of which can be computed exactly for the general probabilistic models under consideration. 
First, we consider approximation of $\evdata$, i.e., expectation w.r.t. the generative distribution $p(y|u,\paramtr)$.
One approach would be to simply approximate it with a Monte Carlo (MC) average, using samples $y^i$ drawn from $y^i\sim p(y|\unom,\paramtr)$.
However, in this approach, the dependence of $y$ on $u$ is lost during optimization of $u$, because $y$ is generated with a fixed, nominal $\unom$.
As an alternative, we consider approximate realizations $\ydet^i(u)$ from $y^i\sim p(y|u,\paramtr)$, where $\ydet^i(u)$ is a deterministic function of~$u$.
The idea is to capture (some of) the dependence of~$y$ on~$u$.
For example, consider the state-space model in~\eqref{eq:nonlinear_state_space}.
The approximate realization~$\ydet^i(u)$ can be formed by first generating samples $(w^i_{1:\numdp}, v^i_{1:\numdp}, x_0^i)$ of the stochastic processes in \eqref{eq:nonlinear_state_space}, and then setting
$\xdet_{0}^i(u)=x_0^i$ and
\begin{equation*}
\ydet_t^i(u) = g_\paramtr(\xdet_t^i(u),u_t) + v^i_t, \ \xdet_{t+1}^i(u) = f_\paramtr(\xdet_{t}^i(u),u_t) + w_t^i.
\end{equation*}

We then approximate \eqref{eq:proposed_opt_prob} with
\begin{equation}\label{eq:proposed_approx_y}
\frac{1}{\numy} \sum_{i=1}^\numy \evcanon \left[ |\qmap_\simlength(u,\ydet^i(u),\qinit,\pinit) - \qnom|^2 \right].
\end{equation}
Note that in practice we always choose $\numy=1$, i.e., we use a single deterministic approximation $\ydet(u)$ formed by setting the stochastic processes to their mean values, e.g., for the state-space model \eqref{eq:nonlinear_state_space} the values of $(w_{1:\numdp}, v_{1:\numdp}, x_0)$ would each be set to zero.

Next, we turn our attention to the approximation of $\evcanon$, i.e., expectation w.r.t. the canonical distribution $\canonical_{u,y}$ in~\eqref{eq:canonical}.
As $\canonical_{u,y}$ depends on $(u,y)$,
the natural MC approximation of \eqref{eq:proposed_approx_y} (for $\numy=1$) would involve samples $(\qinit^i,\pinit^i)$ drawn from $\canonical$ defined by $(u,\ydet(u))$.
For simplicity, we ignore this dependency, and instead use a MC approximation with samples drawn from $\canonical_{\unom,\ydet(\unom)}$, 
 i.e., the canonical distribution defined by the nominal input $\unom$ and the approximate output $\ydet(\unom)$.
Such samples can be generated by running HMC.
Thus, with $(\qinit^i,\pinit^i)\sim \canonical_{\unom,\ydet(\unom)}(\hpos,\hmom)$, we approximate \eqref{eq:proposed_opt_prob} (with $\numy=1$) by
\begin{equation}\label{eq:proposed_opt_mc}
\frac{1}{\numq} \sum_{i=1}^\numq |\qmap_\simlength(u,\ydet(u),\qinit^i,\pinit^i) - \qnom|^2.
\end{equation}

\subsection{Discrete-time approximation of dynamics}

The cost in~\eqref{eq:proposed_opt_mc} depends on the mapping $\qmap$, which involves the evolution of the continuous-time dynamical system $\hamsys$, cf. \eqref{eq:hamiltons_equations}.
For the purpose of optimization, we replace the dynamics in \eqref{eq:hamiltons_equations} with the discrete-time approximation in \eqref{eq:leapfrog}.
Then minimization of \eqref{eq:proposed_opt_mc} w.r.t. $u$ can be approximated by the following discrete-time optimal control problem:
\begin{align}\label{eq:dt_opt_prob}
\min_{u\in\feasinput} \ &\frac{1}{\numq} \sum_{i=1}^\numq |\hpos^i(L) - \qnom|^2, \\
\st \ \ & \hpos^i(0) = \ \qinit^i, \ \hmom^i(0) = \pinit^i,  \ i=1,\dots,\numq \nonumber \\
& \hmom^i(1) = \hmom^i(0) + \gradlog(\hpos^i(0)), \ i=1,\dots,\numq \nonumber \\ 
& \hpos^i(k) = \hpos^i(k-1) + (\stepsize/m)\hmom^i(k), \ k=1,\dots,L, \nonumber \\
& \hmom^i(k) = \hmom^i(k-1) + \gradlog(\hpos^i(k-1)), \ k=1,\dots,L, \nonumber
\end{align}
where $h(\hpos) = (\stepsize/2)\nabla_\hpos \log\left(p(\ydet(u)|u,\hpos)p(\hpos)\right)$,
and $\simsteps = \simlength/\stepsize$.
We then approximately solve \eqref{eq:dt_opt_prob}, i.e. locally optimize it, using standard numerical optimization methods \cite{diehl2009efficient}, such as projected gradient descent.

\begin{algorithm}
	\KwData{probabilistic model, $\paramtr$, $\feasinput$, $\unom\in\feasinput$, $\numq$, $\utol$}
	Initialize $u^{(0)}\leftarrow\unom$ and $\loopi\leftarrow0$\;
	\While{ $\Vert u^{(\loopi)} - u^{(\loopi-1)}\Vert_1 \geq \utol$ }{
	Generate approximate output $\ydet(u^{(k)})$\;
	Run HMC to generate $\numq$ samples from \eqref{eq:canonical}, i.e.,
	$\lbrace \qinit^i,\pinit^i\rbrace_{i=1}^\numq\sim \canonical_{u^{(\loopi)},\ydet(u^{(\loopi)})}(\hpos,\hmom)$\; \label{alg:hmc}
	Solve \eqref{eq:dt_opt_prob}, locally, using projected gradient descent, initialized with $u^{(\loopi)}$. Let $\utmp$ denote the solution\;
	$\loopi\leftarrow \loopi+1$ and $u^{(\loopi)} \leftarrow \utmp$\; \label{alg:solve}
	}
	\Return $u^{(\loopi)}$
	\caption{proposed approach to input design}\label{alg:inputdesign}
\end{algorithm}

\subsection{Discussion}
We conclude this section with a brief discussion of some extensions, limitations, and practical considerations. 

\discuss{\emph{Sensitivity to $\paramtr$} \quad The choice of $\paramtr$ (i.e. the best guess of the unknown true parameter) determines the point about which we attempt to concentrate the posterior, $\qnom$, as well as deterministic output approximation, $\ydet$. It is possible to reduce the sensitivity of the former to errors in our guess of $\paramtr$ by using a `dead-zone' loss function in \eqref{eq:cost}, instead of the 2-norm \cite[\mysec6.1.2]{Boyd2004}. Note that $\paramtr$ does not affect HMC.}

\discuss{\emph{Choice of HMC parameters} \quad The method requires the user to specify $\hmcparam$, cf. \mysec\ref{sec:hmc}, 
which may be tuned via pilot runs (of HMC).
For instance, the stepsize $\stepsize$ can be increased until the acceptance rate falls below a reasonable tolerance; cf. \cite{neal2011mcmc} for guidelines. 
}


\discuss{\emph{Computational complexity} \quad The key factors that influence computational complexity are: i) running HMC, ii) solving \eqref{eq:dt_opt_prob}, and iii) the number of samples $\numq$.
Note that~$\numq$ directly affects both i) and ii). 
The larger~$\numq$, the more iterations of HMC are required to generate the samples. 
$\numq$ is typically on the order of 10-100, which means that the cost of solving \eqref{eq:dt_opt_prob} dominates that of running HMC.
The cost of \eqref{eq:dt_opt_prob} scales linearly with $\numq$, as it is equivalent to solving $\numq$ simultaneous nonlinear model predictive control (MPC) problems.
To reduce the computational burden, it is possible to employ stochastic gradient methods, and use a subset of $\lbrace \qinit^i,\pinit^i\rbrace_{i=1}^\numq$ to compute the gradient at each iteration.	
}

\section{Numerical examples}\label{sec:examples}

\subsection{Illustration on a linear system}\label{sec:ex_linear}
We now return to the illustration presented in \mysec\ref{sec:illustration}, and fill-out the remaining details in light of \mysec\ref{sec:approximations}.
As we can compute $\nabla_\param \log p(y|u,\param)$ exactly, the position of the Hamiltonian system $\hamsys$ can be set to $q=\param\in\real^2$, i.e., for this system, there is no need to include latent variables $x$ in the Hamiltonian state.
We randomly generate $\unom_{1:\numdp}$ by sampling from $\normal{0}{1}$ for $\numdp=50$ and project it onto the feasible set $\feasinput=\lbrace u: |u_t|\leq 1\rbrace$, cf. Fig. \ref{fig:illustrate}(a).
HMC is run (line~\ref{alg:hmc} of \myalg~\ref{alg:inputdesign}) to generate samples from $p(\param|\unom,\ydet(\unom))$, cf. Fig.~\ref{fig:illustrate}(b).
Again, cf. \cite{umenberger19hmcsoftware} for HMC parameters.
The optimization problem \eqref{eq:dt_opt_prob} is then solved, locally with projected gradient descent as in line \ref{alg:solve}, to obtain $\uopt$, which is shown in Fig. \ref{fig:illustrate}(a).
Trajectories of the Hamiltonian systems corresponding to each of these inputs, $\unom$ and $\uopt$, are shown in Fig. \ref{fig:illustrate}(b).
Notice that the final positions of the trajectories corresponding to the optimized input tend to concentrate about $\qnom=\paramtr$.
This is reflected in the posterior distribution $p(\param|\uopt,\ydet(\uopt))$, plotted in Fig. \ref{fig:illustrate}(c), which is more peaked around $\paramtr$.
This is perhaps unsurprising, given that the amplitude of $\uopt$ is typically larger than $\unom$. 
For this reason, we also plot the posterior for $\usgn=\sign(\unom)$ to demonstrate that the improvement observed after optimization is due to more than just a simple increase in the amplitude of the input. 

\subsection{Nonlinear state-space system}\label{sec:ex_nl}
In this section, we consider input design for the following nonlinear state-space model
\begin{align*}
x_{t+1} = \param\frac{x_t^3-x_t}{1+x_t^2} + u_t +w_t, \ y_t = x_t+x_t^2+\frac{1}{10}x_t^3+v_t,
\end{align*}
where $w_t,v_t,x_1$ are normally distributed according to $\normal{0}{0.1^2}$.
The goal is to design the input $u_{1:\numdp}$, $\numdp=30$, so as to infer the parameter $\param$, subject to constraints $|u_t|\leq 1$.
The true parameter value is given by $\paramtr=-0.5$.
A nominal input $\unom$ is generated by sampling from $\normal{0}{0.1^2}$, which is then optimized by Alg. \ref{alg:inputdesign} using $\numq=40$ HMC samples.
The deterministic output $\ydet$ is generated using the mean values of the stochastic quantities, $w_t,v_t,x_1$.
One such application of this procedure is depicted in Fig. \ref{fig:nonlinear}.
The nominal input $\unom$, optimized input $\uopt$, and a pseudo-random binary input given by $\sign\left(\unom\right)$ are shown in Fig. \ref{fig:nonlinear}(a).
The posterior distributions $p(\param|u,\ydet(u))$ corresponding to each of these inputs are plotted in Fig. \ref{fig:nonlinear}(b).
The posterior corresponding to $\unom$ is highly concentrated about $\paramtr$.
Interestingly, the pseudo-random input leads to a posterior with greater bias and variance than even the nominal input; emphasizing that for nonlinear systems, simply scaling the `magnitude' (amplitude and/or power) of the input does not necessarily lead to better performance, in contrast to the linear case.
Fig. \ref{fig:nonlinear}(c) reports the results of maximum likelihood estimation (MLE) of $\param$, using $\uopt$ and $\sign(\unom)$ as shown in Fig. \ref{fig:nonlinear}(a).
For each of the 500 trials, data is generated by simulating the model (with a random realization of the stochastic quantities), and MLE is performed with the expectation maximization (EM) algorithm \cite{Schon2011}. As suggested by Fig. \ref{fig:nonlinear}(b), we observe a significant reduction in estimation error when using $\uopt$ compared to $\sign(\unom)$.
The results in Fig. \ref{fig:nonlinear} are representative of 
performance; 
however, exact results depend on the random realization of $\unom$.
Code to reproduce this example is available in \cite{umenberger19hmcsoftware}.

\begin{figure}
	\centering
	\subfloat[Nominal, optimized, and pseudo-random inputs.]{\includegraphics[width=0.9\columnwidth]{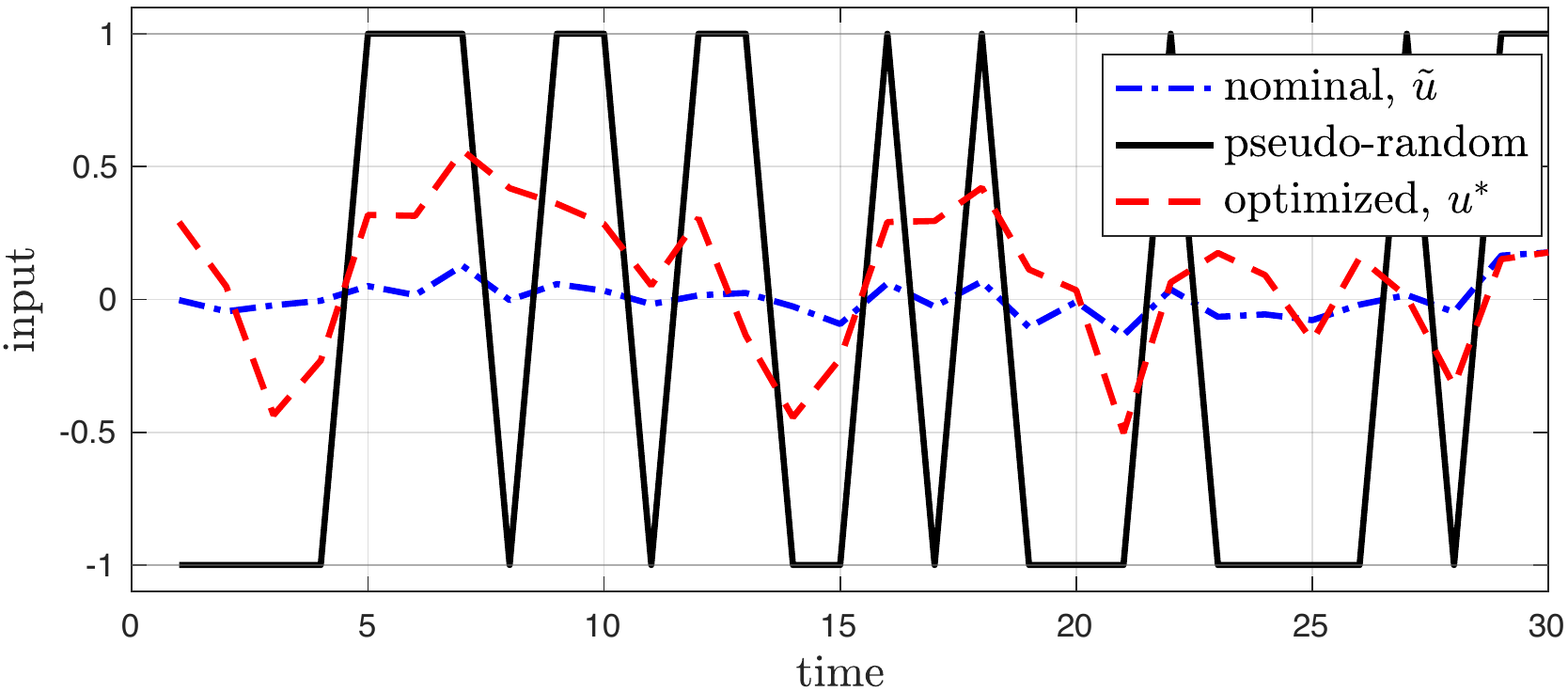}} \\
	\subfloat[$p(\param|u,\ydet(u))$ corresponding to the inputs in (a).]{\includegraphics[width=0.45\columnwidth]{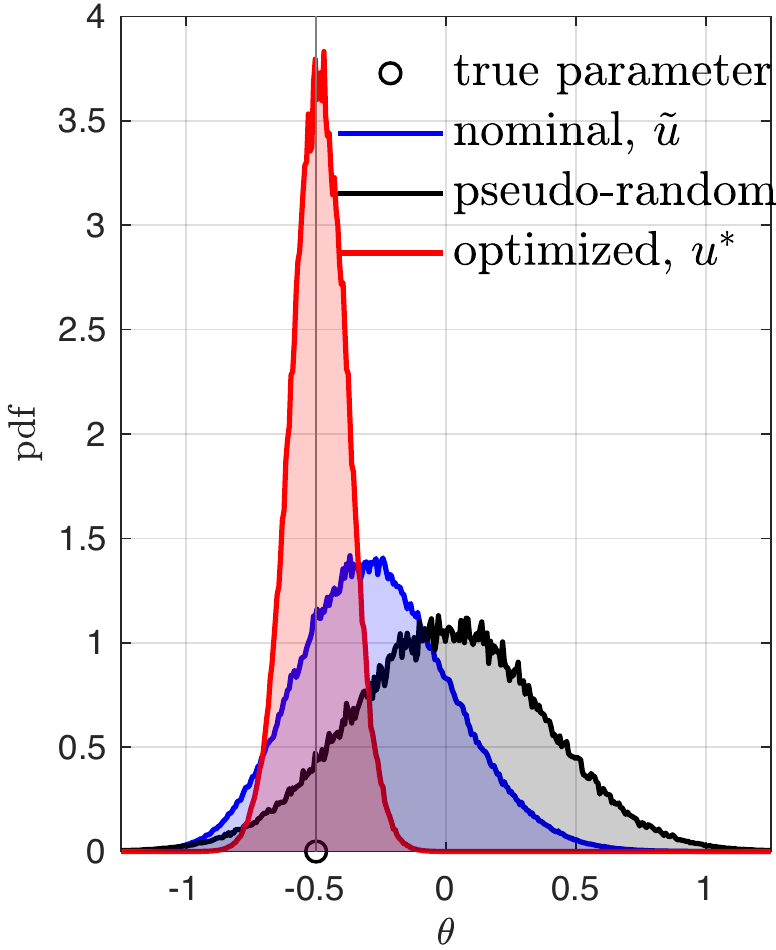}}
	\subfloat[Maximum likelihood estimation error for 500 trials.]{\includegraphics[width=0.45\columnwidth]{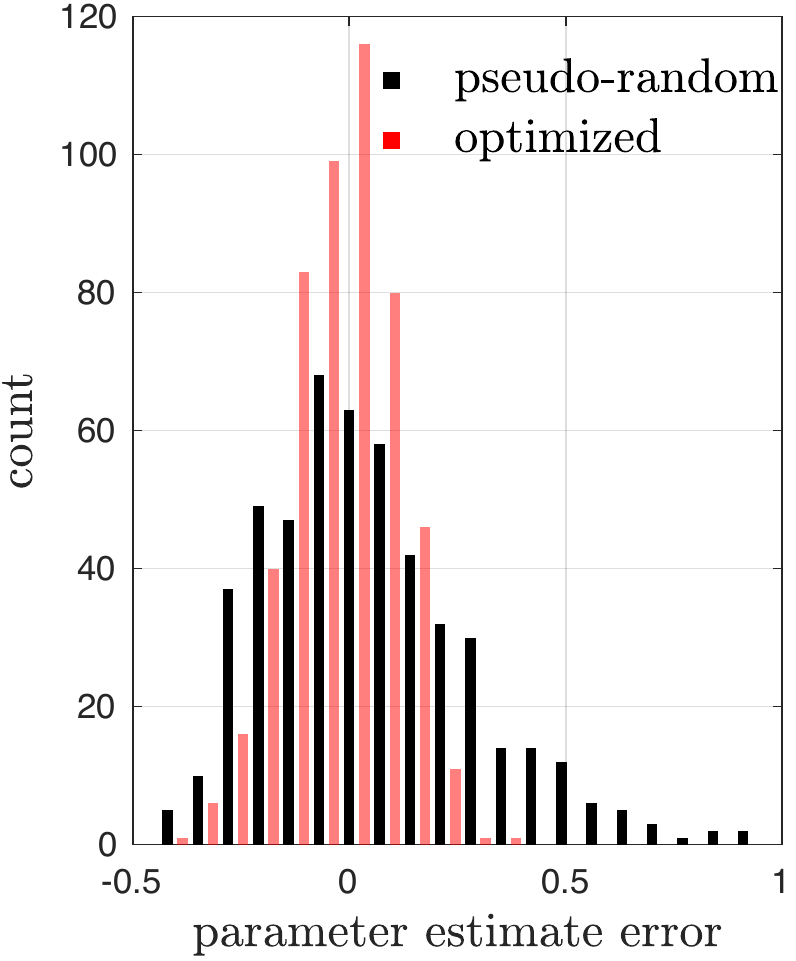}} 
	\caption{Nonlinear input design; cf. \mysec\ref{sec:ex_nl} for details.}
	\label{fig:nonlinear}
\end{figure}

\subsection{Optimal flip angles for MRI}\label{sec:ex_mri}
In this section, we consider a magnetic resonance imaging (MRI) pulse sequence design problem, identical to that studied in \cite{maidens2016parallel}.
We provide a minimal description of the design task; for full details and problem motivation, cf. \cite{maidens2016parallel}.
The dynamics of the system are given by
\begin{align*}
&x_{t+1} = \left[\begin{array}{cc}
\cos(u_t) & -\sin(u_t) \\ \sin(u_t) & \cos(u_t)
\end{array}\right]
\left[\begin{array}{c}
\mriparam_1x_{1,t} + 1-\mriparam_1 \\  \mriparam_2x_{2,t}
\end{array}\right], 
\end{align*}
\begin{align*}
&p(y_t|x_t,u_t,\mriparam) = \frac{y_t}{\sigma^2}\exp\left(-\frac{y_t^2+x_{2,t}^2}{2\sigma^2}\right)\mbf{0}\left(\frac{y_tx_{2,t}}{\sigma^2}\right),
\end{align*}
where $\mbf{0}$ denotes the modified Bessel function of the first kind (order zero).
The hidden states $x_t\in\real^2$ correspond to magnetization. 
The observed outputs $y_t$ are Rician-distributed, with $\sigma^2=0.1$. 
The inputs $u_t$ describe flip-angles in the RF excitation pulses, and as such they are constrained to 
$|u_t|\leq \pi$. 
The model parameters, $\mriparam$, are related to the $T_1$ and $T_2$ relaxation times by $\mriparam_i=\exp(-\Delta_t/T_i)$, where $\Delta_t=0.2$ is the sampling time (between RF pulses).
True parameters correspond to $T_1$ and $T_2$ relaxation times of 0.68 and 0.09, respectively.
The task is to design a 
sequence $u_{1:\numdp}$, $\numdp=29$, so as to estimate $\param=\mriparam_1$ given that $\mriparam_2$ is known. 

To apply our method, we first randomly generate a nominal input $\unom_t\sim\normal{0}{(\pi/9)^2}$ (the variance is chosen such that the nominal input is `small'). 
We then proceed with \myalg\ref{alg:inputdesign}, with $\numq=40$ HMC samples.
The deterministic output $\ydet_t(u)$ is generated by taking the expected value of $y_t$ given $x_t$ (note that $x_t$ is a deterministic function of $u$).
We repeat this process for 10 random initializations (of $\unom$) and report the results in Fig. \ref{fig:mri}.
Fig. \ref{fig:mri}(a) shows the posterior distributions $p(\param|u,\ydet(u))$ for various inputs $u$, including that obtained by dynamic programing (DP), $\udp$, as proposed in \cite{maidens2016parallel}.
We make two observations: 
first, the optimized input $\uopt$ demonstrates a significant improvement over the nominal input $\unom$.
Second, in all but one of the trials, $\uopt$ leads to a posterior that is more concentrated about $\paramtr$, compared to the input from DP $\udp$.
To investigate this further, we perform maximum likelihood estimation of~$\param$ using these inputs. 
Fig. \ref{fig:mri}(c) reports estimation error for 1000 such trials; note that the outputs for these trials were randomly generated from the Ricain distribution.
The histogram on the left of Fig. \ref{fig:mri}(c) compares  $\uopt$ as shown in Fig. \ref{fig:mri}(b) to $\udp$.
As suggested by Fig. \ref{fig:mri}(a), $\uopt$ leads to a reduction in the variance of the estimation error.
The histogram on the right of Fig. \ref{fig:mri} randomly selects one of the optimized inputs from Fig. \ref{fig:mri}(a) for each (MLE) trial; we also observe a reduction in the error variance, though the effect is less pronounced.

\begin{figure}
	\centering
	\subfloat[Posterior distributions $p(\param|u,\ydet(u))$ for 10 random initial $\unom$.]{\includegraphics[width=0.9\columnwidth]{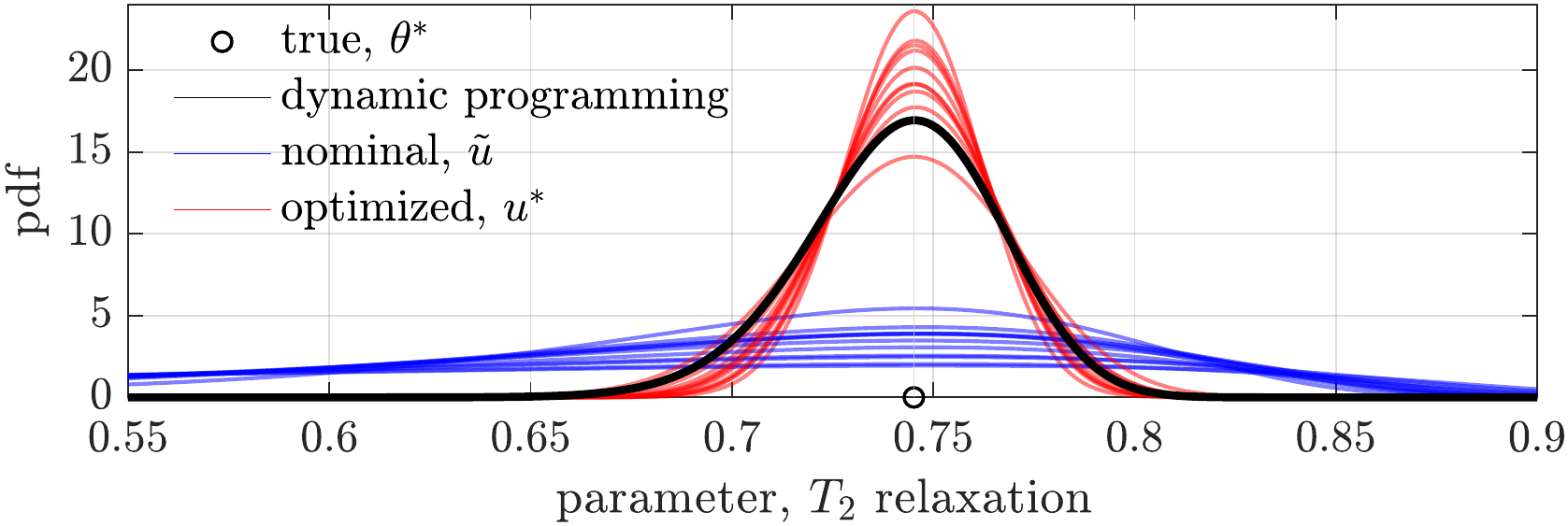}}\\
		\subfloat[Optimized input leading to most peaked posterior in (a).]{\includegraphics[width=0.9\columnwidth]{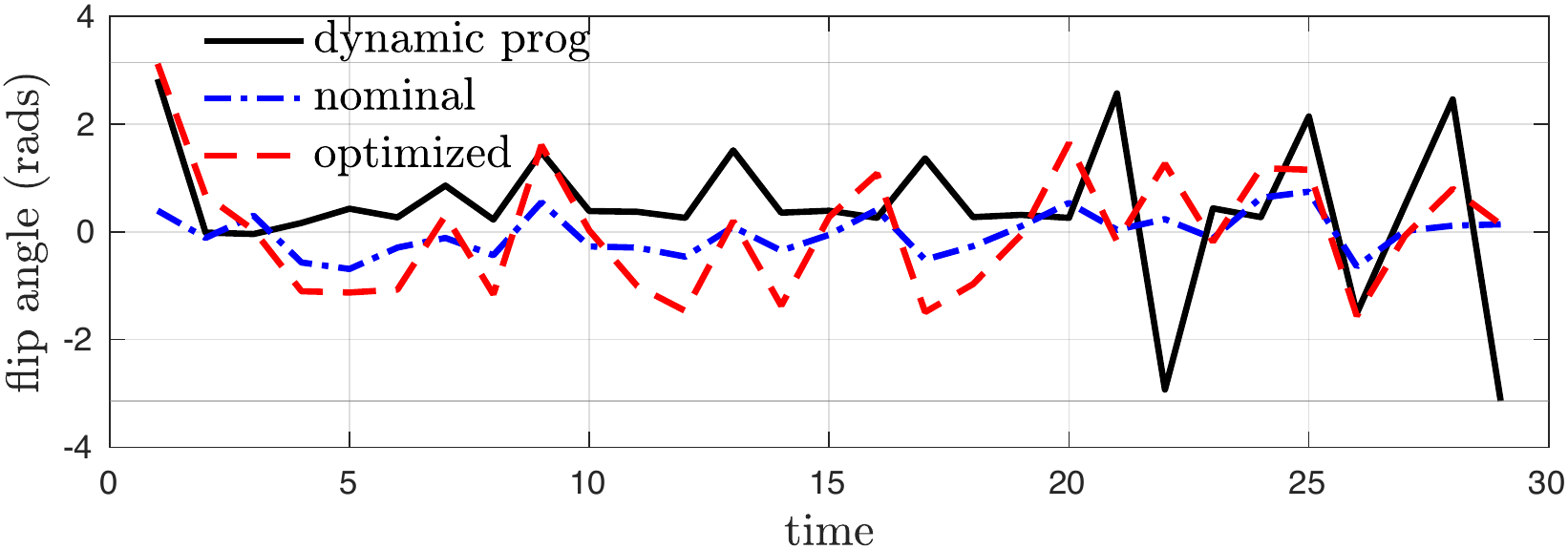}}\\
	\subfloat[Maximum likelihood error for 1000 trials. Left: optimal input in (b). Right: optimal input randomly selected from among the 10 shown in (a).]{\includegraphics[width=\columnwidth]{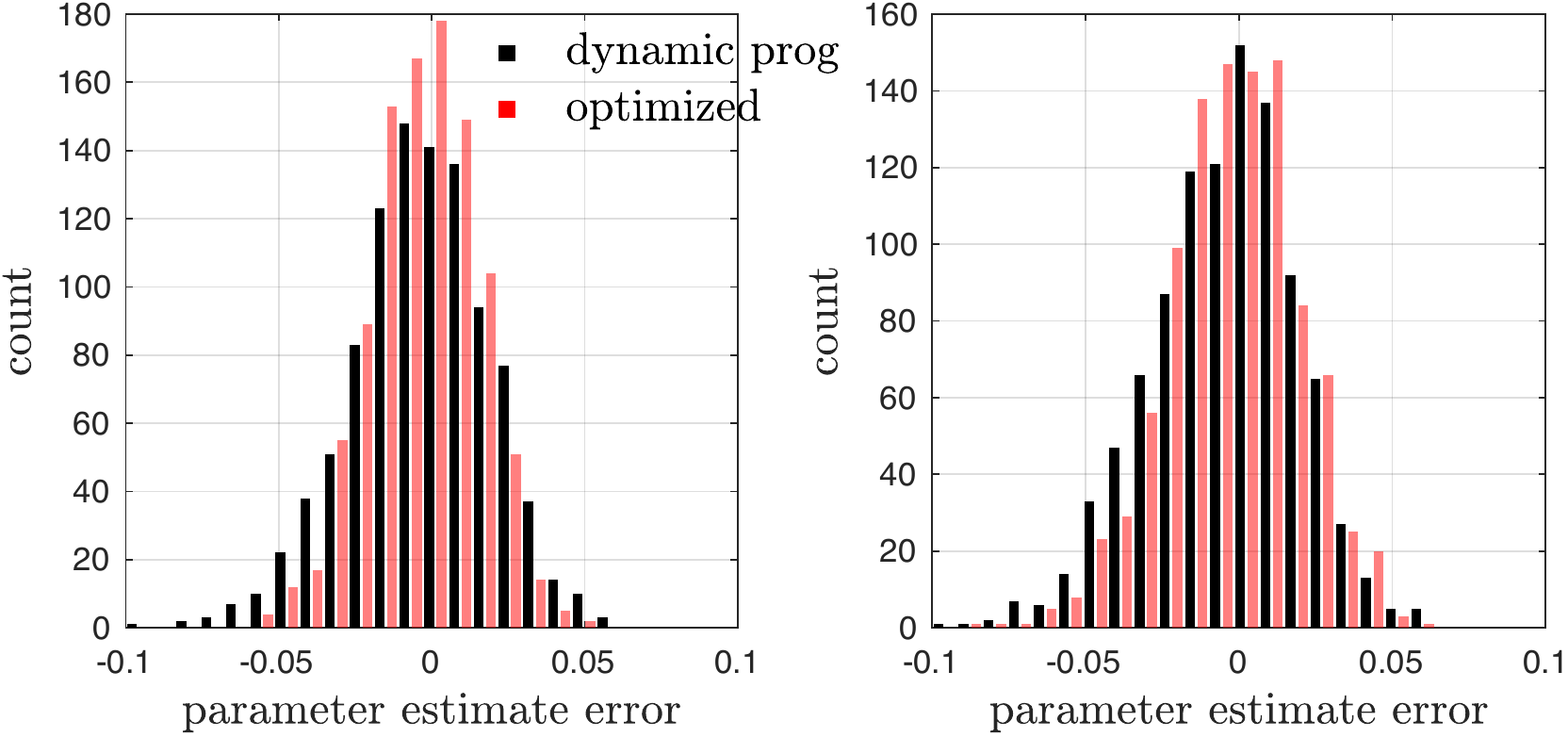}} 
	\caption{MRI pulse sequence design; cf. \mysec\ref{sec:ex_mri} for details.}
	\label{fig:mri}
\end{figure}


\bibliography{main}

\end{document}